\documentclass{article}

\PassOptionsToPackage{numbers, compress}{natbib}

\usepackage[preprint]{neurips_2026}

\usepackage[utf8]{inputenc} 
\usepackage[T1]{fontenc}    
\usepackage{hyperref}       
\usepackage{url}            
\usepackage{booktabs}       
\usepackage{amsfonts}       
\usepackage{nicefrac}       
\usepackage{microtype}      
\usepackage{xcolor}         
\usepackage{amsmath}
\usepackage{amssymb}
\usepackage{multirow}
\usepackage{graphicx}
\usepackage{wrapfig}
\usepackage{algorithm}
\usepackage{algpseudocode}
\usepackage{tikz}
\usepackage{amsthm}
\newtheorem{proposition}{Proposition}
\usetikzlibrary{positioning,arrows.meta,calc,fit,backgrounds,shapes.geometric,patterns}

\title{AffectCodec: Emotion-Preserving Neural Speech Codec with Block-Diagonal Residual FSQ}

\author{%
  Zhaoyang Meng \quad Zhengyao Ma \quad Kecan Mao \quad Yingming Gao \quad Ya Li\thanks{Corresponding author}\\
  Beijing University of Posts and Telecommunications\\
  \texttt{\{mengzy, mazhyao, mao\_kecan, yingming.gao, yli01\}@bupt.edu.cn}
}

\begin{document}

\maketitle

\begin{abstract}
Neural speech codecs have become the discrete interface between raw audio and speech language models, yet they remain optimized primarily for acoustic reconstruction fidelity, which leaves emotion-relevant cues vulnerable to being discarded during quantization, limiting the affective capacity of downstream models. We trace this degradation to two mechanisms: reconstruction-driven bit allocation under limited bitrate and cross-stream leakage in concatenation-based codecs, where acoustic gradients can overwrite nominally emotion-reserved dimensions. We propose AffectCodec, an emotion-preserving neural speech codec built on Block-Diagonal Residual Finite Scalar Quantization (BD-RFSQ). By imposing block-diagonal input and output projections over emotion and acoustic subspaces, BD-RFSQ transforms bit allocation from implicit and loss-driven to explicit and structurally guaranteed, while still preserving a flat token interface for downstream speech language models. AffectCodec further combines this structurally constrained quantizer with multi-granularity emotion conditioning and multi-rate training, enabling robust affect preservation at low bitrates. Experiments across multiple emotional speech benchmarks show that AffectCodec substantially improves emotion preservation, especially in the low-bitrate regime, while maintaining competitive acoustic quality and intelligibility. These results suggest that structurally protected quantization is an effective principle for preserving emotion-relevant information and may provide a general route toward attribute-aware neural speech compression.
\end{abstract}

\section{Introduction}\label{sec:intro}

Speech Language Models (SLMs), such as VALL-E~\cite{valle}, CosyVoice~\cite{cosyvoice2}, and Moshi~\cite{moshi}, have repositioned neural speech codecs from standalone compression modules to discrete tokenizers for spoken language modeling. This shift makes codec representations a critical bottleneck: downstream models can only exploit the information preserved in discrete codec tokens. As SLMs move toward emotionally sensitive applications, including empathetic dialogue, mental health screening, and expressive dubbing, preserving emotion-relevant cues during tokenization becomes essential. Once affective information is discarded by the codec, it cannot be reliably recovered from the resulting discrete representation. 

However, existing speech codecs are primarily designed for perceptual reconstruction rather than emotion preservation. As shown in Fig.~\ref{fig:intro}, passing speech through an encode--quantize--decode pipeline degrades SER Macro-F1 on IEMOCAP from about 64\% to around 53\% at low bitrates, with a persistent gap even at 6.0\,kbps. EMO-Codec~\cite{emocodec} corroborates this finding across 10 codecs and 6 datasets, yet neither the mechanism behind this degradation nor a principled remedy has been established.

We argue that this emotion loss is not merely a side-effect of lossy compression but stems from the lack of structural protection for affective information inside the quantizer. Indeed, our preliminary analysis across IEMOCAP, CREMA-D, and ESD reveals a consistent mismatch between 
\begin{wrapfigure}{r}{0.45\textwidth}
    \centering
    \includegraphics[width=0.43\textwidth]{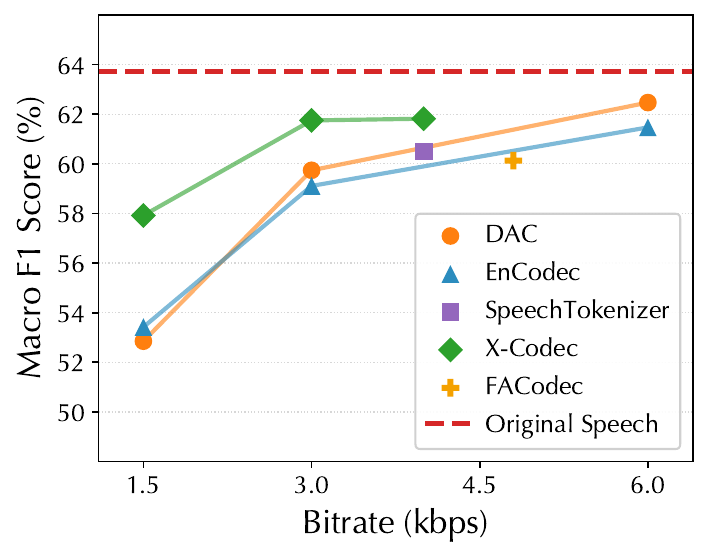}
    \caption{SER Macro-F1 of neural codecs across bitrates on IEMOCAP. The red dashed line denotes performance on original speech.}
    \label{fig:intro}
    \vspace{-0.5em}
\end{wrapfigure}
standard acoustic-quality metrics (STOI, ViSQOL) and emotion retention, indicating that emotion-relevant cues are not reliably preserved as a byproduct of acoustic reconstruction quality. We identify two key causes. \textbf{(1)~Reconstruction-driven bit allocation.} Standard codec objectives (mel-spectrogram, STFT, adversarial losses) prioritize broadband acoustic fidelity and are only weakly aligned with emotion-relevant cues such as pitch trajectory and energy dynamics. 
At high bitrates some affective information survives incidentally, but under capacity pressure it is easily sacrificed. \textbf{(2)~Cross-stream leakage.} A natural fix is to concatenate a pretrained emotion representation with the acoustic latent before quantization. However, when the quantizer uses fully connected projections, each quantization dimension mixes both streams, and dominant reconstruction gradients repurpose nominally emotion-reserved dimensions for acoustic fidelity (verified empirically in Appendix~\ref{app:leakage}).

These observations call for a quantizer whose bit allocation is structurally guaranteed rather than loss-driven, and whose emotion reservation is enforced architecturally rather than through loss balancing alone. We propose \textbf{AffectCodec}, a neural speech codec built around Block-Diagonal Residual Finite Scalar Quantization (BD-RFSQ). BD-RFSQ wraps each residual FSQ stage with learnable input/output projections constrained to be block-diagonal over emotion and acoustic subspaces: emotion dimensions are projected only into the reserved emotion partition, preventing cross-stream overwriting by construction. At the same time, each stage emits a single composite token, preserving a flat token interface compatible with downstream SLMs. We make the following contributions:

\begin{itemize}
    \item We propose \textbf{BD-RFSQ}, which turns bit allocation from an implicit, loss-driven process into an explicit and structurally guaranteed design, while preserving unified per-stage tokens compatible with flat-token speech language models.

    \item We design a \textbf{multi-rate training strategy} comprising a multi-rate reconstruction task that supervises intermediate residual-stage outputs with both mel-reconstruction and emotion cycle-consistency losses, together with a biased stage dropout that concentrates training on low-bitrate operating points where emotion degradation is most severe.

    \item We present \textbf{AffectCodec}, which integrates BD-RFSQ with the multi-rate training strategy and multi-granularity emotion conditioning. Experiments across multiple emotional speech benchmarks show that AffectCodec substantially improves emotion preservation---especially at low bitrates---while maintaining competitive acoustic quality and intelligibility.

\end{itemize}

\section{Related Work}\label{sec:related}

\paragraph{Neural speech codecs.}
Modern neural speech codecs largely build on the VQ-VAE
framework~\cite{vqvae} and the high-fidelity codec paradigm established by
SoundStream~\cite{soundstream}, which combines a fully convolutional
encoder--decoder, residual vector quantization (RVQ), and adversarial training.
EnCodec~\cite{encodec} further improves this recipe with multi-scale STFT
discriminators and a gradient-based loss balancer for stabilizing heterogeneous
reconstruction objectives. DAC~\cite{dac} advances reconstruction quality
through snake activations, factorized $\ell_2$-normalized projections for
mitigating codebook collapse, and quantizer dropout for supporting multiple
bitrates within a single model. HiFi-Codec~\cite{hificodec} introduces
Group-RVQ, which partitions encoder features into parallel groups and quantizes
each group with a smaller RVQ stack, reducing codebook complexity while
maintaining competitive fidelity. Despite these advances, most neural speech codecs are still optimized primarily
for perceptual reconstruction, using objectives such as mel-spectrogram or
multi-scale STFT losses and evaluating quality with metrics such as ViSQOL. Emotion preservation, however, has rarely been
treated as a primary codec-design objective. The recent EMO-Codec
benchmark~\cite{emocodec} exposes this limitation: across multiple codecs and
datasets, codec compression substantially degrades downstream speech emotion
recognition, especially at low bitrates. These findings suggest that emotion
loss should be addressed at the codec level, rather than being left entirely to
downstream emotion compensation.

\paragraph{Discrete quantization for neural audio codecs.}
A complementary line of work studies the quantization mechanism itself. Vanilla
VQ-VAE~\cite{vqvae} represents each frame with a learned codeword, but it is
prone to codebook collapse and requires exponentially large codebooks to scale
to high bitrates. RVQ~\cite{soundstream} addresses the latter issue by stacking
multiple small VQ layers, each quantizing the residual left by the previous
layer, thereby achieving a large effective code space with tractable codebooks.
Codebook utilization can be further improved by EMA updates, commitment losses,
and factorized projections into a low-dimensional codebook space~\cite{dac}.

Finite Scalar Quantization (FSQ)~\cite{fsq} replaces learned codebooks with
dimension-wise rounding to fixed scalar levels, yielding simple and stable
tokenization with high code utilization. This property has made FSQ attractive
for speech language modeling~\cite{cosyvoice2}, while related lookup-free
quantizers further simplify discrete representations~\cite{lfq}. To increase
capacity, Residual FSQ (RFSQ) stacks multiple FSQ stages over residuals, but
later stages often suffer from residual magnitude decay and underuse the scalar
grid. Robust RFSQ~\cite{rrfsq} alleviates this issue through stage-wise
normalization. However, existing RFSQ formulations still operate directly in
the low-dimensional FSQ grid space, making them poorly suited for
high-dimensional speech latents. Compressing an encoder representation into
such a narrow quantization space can severely limit expressiveness, motivating
factorized projections between high-dimensional latent spaces and compact FSQ
spaces.

\paragraph{Semantic- and attribute-aware codecs.}
Recent codecs have begun to incorporate representations beyond purely acoustic
reconstruction. X-Codec~\cite{xcodec} concatenates self-supervised semantic
features with acoustic encoder outputs before quantization to improve phonetic
intelligibility for LLM-based speech generation.
SpeechTokenizer~\cite{speechtokenizer} uses cross-layer distillation to align
the first RVQ codebook with HuBERT~\cite{hubert} features, encouraging later
codebooks to capture timbre and acoustic details. FACodec~\cite{naturalspeech} 
assigns a dedicated quantizer to each of content,
prosody, timbre, and acoustic-detail streams. Its prosody stream captures
intonation-related variation that overlaps with some affective cues, but prosody
alone is not equivalent to emotion. Moreover, the use of separate quantizers per
attribute produces structurally heterogeneous tokens that require specialized
downstream models (e.g., factorized diffusion), limiting compatibility with
flat-token speech language model architectures.

\section{AffectCodec}\label{sec:method}

\subsection{Emotion-Acoustic Dual-Path Architecture}\label{sec:arch}

Affective cues in speech are carried by pitch, energy, speaking rate, and long-range prosodic contours. Compared with the broadband spectral details that standard codec objectives optimize for, these cues are typically low-dimensional and temporally smooth. Reconstruction losses such as mel and STFT distances therefore provide only indirect supervision for emotion preservation: a model can achieve low spectral distortion while still altering the pitch dynamics or prosodic trajectories that are critical for perceived emotion. This limitation is further amplified by the multi-scale nature of affective cues in speech: utterance-level emotional tone shapes global prosody, whereas frame-level pitch and energy variations convey fine-grained affective nuances. Standard codec encoders, optimized primarily for broadband acoustic fidelity, are therefore not explicitly incentivized to capture emotion information at either granularity. This motivates a dual-path encoder that models affective information through a dedicated pathway and fuses multi-scale emotion cues into the codec latent, rather than relying on reconstruction losses alone. Fig.~\ref{fig:overview} (right) illustrates the overall architecture. 

\paragraph{Acoustic and emotion encoders.}
The acoustic branch follows a DAC-style convolutional encoder $\mathcal{E}_{ac}$
with strides $[2,4,5,8]$ and a total downsampling factor of 320, mapping the
waveform to a high-dimensional acoustic representation
$\mathbf{A} = \mathcal{E}_{ac}(\mathbf{x}) \in \mathbb{R}^{d_a' \times T}$,
where $T = L/320$. In parallel, a frozen emotion2vec encoder extracts
frame-level affective features, which are aligned to the codec frame rate by a
lightweight convolutional adapter:
\begin{equation}
    \mathbf{E}
    =
    \mathcal{E}_{em}(\operatorname{emo2vec}(\mathbf{x}))
    \in \mathbb{R}^{d_e' \times T}.
    \label{eq:emotion_enc}
\end{equation}
Using a frozen pretrained emotion teacher provides a high-quality affective
signal and prevents the emotion representation from drifting under the pressure
of reconstruction objectives during training.

\paragraph{Multi-granularity emotion conditioning.}
AffectCodec models affective information at two complementary scales.
At the \textbf{coarse level}, the Coarse-granularity Emotion Modulation (CEM) module
extracts a global emotion embedding $\mathbf{e}_g = \mathrm{AttnPool}(\mathbf{E})$
via attentive pooling and fuses it into the acoustic pathway through FiLM~\cite{film}
modulation:
\begin{equation}
\mathbf{A}_f
= \gamma \odot \mathbf{A} + \beta,
\quad
\gamma = g(\mathbf{e}_g),
\quad
\beta = h(\mathbf{e}_g),
\end{equation}
where $g(\cdot)$ and $h(\cdot)$ are two-layer linear projections.
At the \textbf{fine level}, the frame-level emotion features $\mathbf{E}$ are channeled into a
dedicated quantization pathway in BD-RFSQ (Sec.~\ref{sec:bdrfsq}),
preserving the local pitch, energy, and prosodic variations that carry
fine-grained emotional expressiveness.

\begin{figure}[t]
    \centering
    \includegraphics[width=0.95\textwidth]{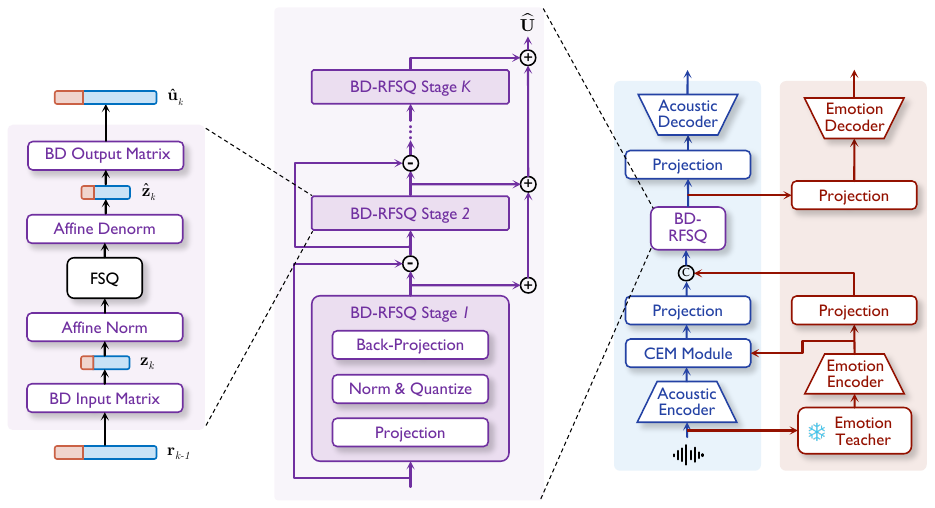}
    \caption{%
        \textbf{Left}: Internal structure of a single BD-RFSQ stage.
        The residual $\mathbf{r}_{k-1}$ is projected to the compact FSQ space
        via a block-diagonal input matrix (red: emotion partition; blue:
        acoustic partition), affine-normalized, scalar-quantized by FSQ,
        affine-de-normalized, and mapped back to the latent space via a
        block-diagonal output matrix, yielding the stage reconstruction
        $\widehat{\mathbf{u}}_k$.
        \textbf{Center}: BD-RFSQ chains $K$ such stages with residual
        connections; each stage quantizes the current residual and subtracts
        its reconstruction before passing the remainder to the next stage.
        \textbf{Right}: AffectCodec architecture. A frozen emotion teacher and
        a DAC-style acoustic encoder form a dual-path front-end; the CEM module
        fuses coarse-grained emotion cues into the acoustic pathway, and the
        concatenated representation is discretized by BD-RFSQ. Separate
        acoustic and emotion decoders reconstruct the waveform and supervise
        the emotion partition, respectively.%
    }
    \label{fig:overview}
\end{figure}

\subsection{Block-Diagonal Residual FSQ}\label{sec:bdrfsq}

In this subsection, we first introduce factorized RFSQ, which adapts the factorized codebook projection technique~\cite{vqgan} to naive RFSQ, improving its training stability and reconstruction quality. Building upon this foundation, we then present BD-RFSQ, the core quantizer design of AffectCodec.

\paragraph{Factorized RFSQ.}
FSQ quantizes each dimension independently into a small number of scalar
levels, so the quantization grid dimension is typically very small. Unlike VQ,
which can employ high-dimensional codebooks, forcing a high-dimensional encoder
latent through a narrow FSQ grid would severely limit representational
capacity. To address this issue, we propose factorized RFSQ by introducing the
factorized codebook technique into residual FSQ. The factorized codebook
technique~\cite{vqgan,dac} was originally proposed for VQ-based quantizers, which uses a 
low-dimensional lookup space to decouple code selection from high-dimensional 
code embedding, improving codebook utilization and reconstruction quality.
Factorized RFSQ adapts this principle by wrapping each FSQ stage with learnable input and output projections
$\pi_{\mathrm{in}}: \mathbb{R}^d \to \mathbb{R}^f$ and
$\pi_{\mathrm{out}}: \mathbb{R}^f \to \mathbb{R}^d$, where $d \gg f$.
Residuals are thus maintained in the original high-dimensional latent space
where the encoder representation is expressive, while scalar quantization
operates in a compact space where FSQ is effective. This factorization
substantially improves reconstruction quality over naive RFSQ.

\paragraph{Block-diagonal constraint.}
Factorized RFSQ alone does not structurally protect any designated
attribute. With fully connected projections, each quantization dimension reads
from and writes to all input channels, so an intended emotion/acoustic
partition can be overwritten by dominant reconstruction gradients. To enforce partition integrity, we
further constrain the input and output projections of factorized RFSQ to be
block-diagonal with respect to the emotion/acoustic split, yielding
Block-Diagonal Residual Finite Scalar Quantization (BD-RFSQ).
The emotion and acoustic features are first projected into a partitioned latent space:
\begin{equation}
\mathbf{U}_e = \phi_e(\mathbf{E}) \in \mathbb{R}^{d_e \times T}, \quad
\mathbf{U}_a = \phi_a(\mathbf{A}_f) \in \mathbb{R}^{d_a \times T}, \quad
\mathbf{U} = \operatorname{Concat}(\mathbf{U}_e, \mathbf{U}_a).
\end{equation}
Let $\mathbf{r}_0=\mathbf{U}$ denote the initial residual,
$d=d_e+d_a$ the total latent dimension, and
$f=f_e+f_a$ the compact FSQ dimension. Each BD-RFSQ stage $k$
proceeds in three steps: {project}, {normalize-and-quantize}, and
{back-project}. \textbf{(i) Block-diagonal projection.}
The residual is projected via:

\begin{equation}
    \mathbf{z}_k \in \mathbb{R}^{f}
    = \pi_{\mathrm{in}}^{(k)}(\mathbf{r}_{k-1}),
    \qquad
    \pi_{\mathrm{in}}^{(k)}
    =
    \begin{bmatrix}
    \pi_{\mathrm{in},e}^{(k)} & 0 \\
    0 & \pi_{\mathrm{in},a}^{(k)}
    \end{bmatrix},
\end{equation}
where the block-diagonal form ensures that emotion and acoustic channels are
read independently, enforcing stream-level separation within the quantizer.
\textbf{(ii) Affine normalization and quantization.}
To address residual magnitude decay in later stages~\cite{rrfsq}, we apply a
learnable per-dimension affine transformation on the residual before quantization:
\begin{equation}
    \widetilde{\mathbf{z}}_k
    = \mathbf{s}_k \odot (\mathbf{z}_k - \mathbf{b}_k)
\end{equation}

where $\mathbf{s}_k, \mathbf{b}_k \in \mathbb{R}^{f}$ are per-dimension scale factor and bias
that normalize the residual to cover the effective range of FSQ grid, $\odot$ denotes element-wise multiplication.
The scale is parameterized as
$\mathbf{s}_k = \mathrm{softplus}(\boldsymbol{\ell}_k) + \epsilon$, where
$\boldsymbol{\ell}_k \in \mathbb{R}^{f}$ is a learnable parameter and
$\epsilon$ is a small positive floor; the $\mathrm{softplus}$ ensures
$\mathbf{s}_k > 0$ with smooth, non-vanishing gradients everywhere.
Unlike the data-dependent LayerNorm
in~\cite{rrfsq}, our affine uses only fixed model parameters, naturally
supporting end-to-end training and index-only decoding. The normalized vector
$\widetilde{\mathbf{z}}_k \in \mathbb{R}^{f}$ is then quantized:
$\widehat{\widetilde{\mathbf{z}}}_k, I_k
= \mathrm{FSQ}(\widetilde{\mathbf{z}}_k)$,
where $\widehat{\widetilde{\mathbf{z}}}_k \in \mathbb{R}^{f}$ is the
quantized vector and $I_k$ is the discrete code index emitted by stage $k$.
\textbf{(iii) Back-projection and residual update.}
The quantized vector is inverse-normalized and mapped back via a block-diagonal
output projection:
\begin{equation}
    \widehat{\mathbf{u}}_k
    =
    \pi_{\mathrm{out}}^{(k)}
    \!\left(
    \widehat{\widetilde{\mathbf{z}}}_k \oslash \mathbf{s}_k
    + \mathbf{b}_k
    \right),
    \quad
    \pi_{\mathrm{out}}^{(k)}
    =
    \begin{bmatrix}
    \pi_{\mathrm{out},e}^{(k)} & 0 \\
    0 & \pi_{\mathrm{out},a}^{(k)}
    \end{bmatrix},
\end{equation}
where $\oslash$ denotes element-wise division. The residual is updated as
$\mathbf{r}_k = \mathbf{r}_{k-1} - \widehat{\mathbf{u}}_k$, and the final
quantized latent is $\widehat{\mathbf{U}} = \sum_{k=1}^{K}\widehat{\mathbf{u}}_k$.

\paragraph{Structural guarantee and token format.}

With block-diagonal input/output projections and dimension-wise FSQ, emotion and
acoustic residuals are updated only within their respective partitions, which
provides a structural guarantee that emotion indices cannot be overwritten by
acoustic channels inside the quantizer (proved formally in Appendix~\ref{app:proofs}). Meanwhile, BD-RFSQ preserves a flat
token interface: each stage emits a
single composite index, where emotion and acoustic sub-indices occupy different
dimensions of the same token. Downstream speech language models therefore
receive a uniform sequence of per-stage tokens without needing heterogeneous
token handling. This contrasts with SpeechTokenizer~\cite{speechtokenizer} and
FACodec~\cite{naturalspeech}, which rely on stage- or attribute-specific token
structures and thus require specialized downstream modeling or generation
pipelines.

\subsection{Training Strategy}\label{sec:strategy}

\paragraph{Multi-rate reconstruction task.}
BD-RFSQ reserves emotion-specific capacity at every residual stage, but without
explicit supervision at intermediate stage counts, the model has little
incentive to maintain emotion fidelity or acoustic quality at low bitrates.
We address this with a multi-rate reconstruction loss. Let
$\widehat{\mathbf{U}}_m = \sum_{k=1}^{m}\widehat{\mathbf{u}}_k$ denote the
cumulative quantized latent after $m$ stages and
$\widehat{\mathbf{x}}_m=\mathcal{D}_{\mathrm{ac}}(\widehat{\mathbf{U}}_m)$.
For a set of target stage counts $\mathcal{S}_{\mathrm{mr}}$ corresponding to
the operating bitrates used at inference, we define
\begin{equation}
    \mathcal{L}_{\mathrm{mr}}
    =
    \sum_{m\in\mathcal{S}_{\mathrm{mr}}}
    w_m
    \left[
    \mathcal{L}_{\mathrm{mel}}(\widehat{\mathbf{x}}_m,\mathbf{x})
    +
    \eta\,
    \mathcal{L}_{\mathrm{cycle}}(\widehat{\mathbf{x}}_m,\mathbf{x})
    \right].
\end{equation}
Each intermediate output is decoded through the shared decoder and supervised
with both mel-reconstruction and emotion cycle-consistency losses, explicitly
encouraging the model to optimize emotion fidelity and acoustic quality at
low-bitrate operating points.

\paragraph{Biased stage dropout.}
Quantizer dropout~\cite{soundstream} enables variable-bitrate inference
by randomly truncating the number of active stages during training. DAC~\cite{dac}
improves upon the original uniform sampling by applying dropout with a fixed
probability, better balancing low-bitrate and full-bitrate quality. We further
extend this idea with a biased dropout distribution that concentrates
optimization effort on the low-bitrate regimes where emotion loss is most
severe: for each training sample, with a certain probability the active number
of stages is drawn from a categorical distribution biased toward fewer stages, while the
remaining samples use all $K$ stages. The dropout targets are aligned with the
multi-rate supervision points in $\mathcal{L}_{mr}$, so each operating bitrate
receives both dedicated training coverage and direct emotion-preservation
signals (details in Appendix~\ref{app:mr_config}).

\paragraph{Overall objective.}
The full training objective combines reconstruction, quantization, and
emotion-preservation terms. The reconstruction loss
$\mathcal{L}_{{rec}}$ aggregates multi-scale mel-spectrogram,
time-domain $L_1$, multi-scale STFT adversarial, and feature matching losses.
The emotion feature loss
$\mathcal{L}_{{emo}} =
\|\widehat{\mathbf{E}} - \mathbf{E}\|_2^2$ supervises the reserved emotion
partition in the emotion2vec feature space, where
$\widehat{\mathbf{E}} =
\mathcal{D}_{{em}}(\widehat{\mathbf{U}}_{1:d_e})$. To further penalize
affective distortion introduced by the decoder, we impose an emotion
cycle-consistency loss $\mathcal{L}_{{cycle}}$, defined as the cosine distance
between the temporally mean-pooled emotion2vec embeddings of the synthesized
waveform $\widehat{\mathbf{x}}$ and the original waveform $\mathbf{x}$.

All modules are trained jointly with the combined loss:
\begin{equation}
    \mathcal{L}
    =
    \mathcal{L}_{{rec}}
    + \alpha\, \mathcal{L}_{{cm}}
    + \beta\, \mathcal{L}_{{emo}}
    + \lambda\, \mathcal{L}_{{cycle}}
    + \delta\, \mathcal{L}_{{mr}},
    \label{eq:total_loss}
\end{equation}
where $\alpha, \beta, \lambda, \delta$ are scalar weighting coefficients.
Unlike vanilla FSQ, we include a commitment loss in BD-RFSQ to keep the pre-quantization values close to their quantized counterparts, stabilizing straight-through gradient estimation.

\section{Experiments}\label{sec:experiments}

\subsection{Experimental Setup}\label{sec:setup}

\paragraph{Datasets.}
We train AffectCodec on a mixture of general and emotional speech. The general
corpus is LibriSpeech~\cite{librispeech} (960\,h of read English) and the emotional corpus is the IEMOCAP~\cite{iemocap} training
split (approximately 10\,h of scripted and improvised dialogue across four emotion classes: angry, happy, neutral, sad).
All audio is resampled to 16\,kHz. We evaluate on three datasets that span different speaker populations, recording conditions, and emotion taxonomies: IEMOCAP (10 actors, 4 classes, scripted and improvised dialogue), CREMA-D~\cite{crema} (91 actors with diverse demographics, 6 classes), and ESD~\cite{esd} (10 English-language speakers selected from 20 total, 5 classes, studio-recorded). Together they cover spontaneous and read speech, small- and large-scale speaker pools, and partially overlapping emotion labels.

\paragraph{Evaluation metrics.}
Our primary emotion metric is the Emotion Degradation Rate (EDR),
defined as the relative drop in SER F1 after codec reconstruction, i.e.,
$\mathrm{EDR} = (F_1(\mathbf{x}) - F_1(\widehat{\mathbf{x}})) / F_1(\mathbf{x}) \times 100\%$.
We report both Macro EDR (treating all classes equally) and Weighted EDR
(reflecting the natural class distribution). To avoid circular evaluation
with the emotion2vec teacher, we average EDR over three independently trained
SER classifiers built on frozen HuBERT-Large~\cite{hubert},
WavLM-Large~\cite{wavlm}, and Wav2Vec\,2.0-Large~\cite{wav2vec2} features,
each following the S3PRL/SUPERB benchmark design~\cite{superb}. We
additionally report V/A/D MSE---the mean squared error of predicted Valence,
Arousal, and Dominance between original and reconstructed speech, using
Wav2Vec2-Large fine-tuned on MSP-Podcast~\cite{msp}---to capture continuous
affective distortion that categorical EDR may miss. For acoustic quality we
use ViSQOL~\cite{visqol} and STOI~\cite{stoi}; for intelligibility we
report Word Error Rate (WER) obtained from Whisper-Large-v3~\cite{whisper}.

\paragraph{Baselines.}
We compare against four publicly available codecs covering the major design families: EnCodec~\cite{encodec} and DAC~\cite{dac} (standard RVQ), SpeechTokenizer~\cite{speechtokenizer} (semantic-distilled RVQ), and X-Codec~\cite{xcodec} (semantic-concatenation RVQ). All baselines use official pretrained checkpoints and are evaluated by truncating RVQ stages to match target bitrates. EnCodec and DAC support up to 6.0\,kbps; SpeechTokenizer and X-Codec both support up to 4.0\,kbps. None includes an explicit emotion preservation mechanism.

\paragraph{Implementation details.}
AffectCodec builds on the DAC encoder-decoder backbone with Snake activations~\cite{bigvgan} and encoder strides $[2,4,5,8]$, yielding a frame rate of 50\,Hz at 16\,kHz input. BD-RFSQ uses $K{=}8$ residual stages with a latent partition $(d_e, d_a) = (256, 768)$ projected to FSQ dimensions $(f_e, f_a) = (3, 6)$ (selected via rate-distortion search; Appendix~\ref{app:rd_search}) and scalar levels $\mathbf{L} = [2,2,2,4,4,4,4,4,4]$, where the first three dimensions form the emotion partition and the remaining six the acoustic partition. This yields $2^3 \cdot 4^6 = 2^{15}$ codes per stage and $15 \times 8 \times 50 = 6.0$\,kbps at full depth. The emotion2vec encoder is frozen throughout training. Additional hyperparameters and a pseudocode summary of BD-RFSQ are provided in Appendices~\ref{app:details} and~\ref{app:algorithm}. For the ablation study, each variant is trained
with the same schedule and hyperparameters as the full model. Only the
component under test is modified; all other settings remain identical.

\subsection{Comparative Evaluation}\label{sec:comparison}

\paragraph{Emotion preservation.}
Table~\ref{tab:main} reports EDR and V/A/D MSE across three datasets and
three bitrates. AffectCodec achieves the lowest Macro EDR in 7 of 9
dataset--bitrate conditions and ranks second in the remaining two
(3.0\,kbps on CREMA-D and ESD), where X-Codec and SpeechTokenizer lead by
small margins (8.51\% vs.\ 9.25\% and 8.75\% vs.\ 9.66\%, respectively).
The advantage is most pronounced under tight capacity: at 1.5\,kbps on
IEMOCAP, AffectCodec attains 5.27\% Macro EDR versus 9.09\% for X-Codec
and 17.05\% for DAC; on CREMA-D the gap widens to 12.67\% vs.\ 26.72\%.
At 6.0\,kbps, where standard codecs already preserve emotion incidentally,
AffectCodec still leads on all three datasets (e.g., 0.85\% vs.\ DAC's
1.95\% on IEMOCAP), though margins narrow as expected. V/A/D MSE
largely corroborates the categorical findings: AffectCodec obtains the
lowest MSE in 8 of 9 conditions, with the sole exception again at
3.0\,kbps on CREMA-D (2.64 vs.\ X-Codec's 2.57). The strong results on
CREMA-D---91 speakers with diverse demographics, none seen during
training---indicate that BD-RFSQ's structural emotion reservation
generalizes to unseen speakers and recording conditions.

\begin{table}[t]
\centering
\caption{Emotion preservation comparison across bitrates and datasets. MEDR
and WEDR denote Macro and Weighted Emotion Degradation Rate (\%, lower is
better). MSE reports V/A/D mean squared error ($\times 10^{-3}$, lower is
better). Best results are \textbf{bolded}, and second-best results are \underline{underlined}. "--" indicates the model does not support that bitrate.}
\label{tab:main}
\vskip 0.1in
\small
\setlength{\tabcolsep}{4.5pt}
\renewcommand{\arraystretch}{1}
\scalebox{1}{
\begin{tabular}{l rrr rrr rrr}
\toprule
\multicolumn{1}{c}{\multirow{2}{*}{\raisebox{-0.6ex}{\textbf{Model}}}}
 & \multicolumn{3}{c}{\textbf{IEMOCAP}}
 & \multicolumn{3}{c}{\textbf{CREMA-D}}
 & \multicolumn{3}{c}{\textbf{ESD}} \\
\cmidrule(lr){2-4} \cmidrule(lr){5-7} \cmidrule(lr){8-10}
 & MEDR &  WEDR &  MSE
 & MEDR &  WEDR &  MSE
 & MEDR &  WEDR &  MSE \\
\midrule
\multicolumn{10}{c}{\textit{\textbf{bitrate = 1.5\,kbps}}} \\
\midrule
EnCodec          & 21.19 & 20.86 &  7.18 & 40.10 & 39.14 & 10.63 & 45.63 & 45.63 & 7.56 \\
DAC              & 17.05 & 17.05 &  9.71 & 40.48 & 39.59 & 10.01 & 46.67 & 46.67 & 6.44 \\
SpeechTokenizer  & 16.24 & 15.92 &  4.74 & 29.42 & 28.76 &  9.72 & 25.55 & 25.55 & 2.86 \\
X-Codec          &  \underline{9.09} &  \underline{9.11} &  \underline{3.80} & \underline{26.72} & \underline{25.84} &  \underline{5.96} &  \underline{21.14} &  \underline{21.14} & \underline{2.38} \\
\textbf{AffectCodec}     &  \textbf{5.27} & \textbf{5.63}     &   \textbf{2.48}    &  \textbf{12.67}   &  \textbf{12.57}   &  \textbf{3.77}   &   \textbf{20.04}    &   \textbf{20.04}    &   \textbf{1.75}  \\
\midrule
\multicolumn{10}{c}{\textit{\textbf{bitrate = 3.0\,kbps}}} \\
\midrule
EnCodec          & 10.16 & 9.82 &  3.97 & 27.25 & 26.56 & 5.43 & 29.25 & 29.25 & 3.29 \\
DAC              & 6.23  & 6.27  &  3.47  & 17.52  & 16.99  & 4.51  & 19.37  & 19.37  &  1.71 \\
SpeechTokenizer  & 7.05  & 7.01  &  2.89  & 15.71  & 15.41  &  4.44  & \textbf{8.75}  & \textbf{8.75}  & \underline{1.16}  \\
X-Codec          & \underline{3.08}  & \underline{2.71}   & \underline{2.02}   & \textbf{8.51}  & \textbf{8.47}  & \textbf{2.57}   & 13.75   & 13.75   & 1.30  \\
\textbf{AffectCodec}      &    \textbf{1.77}   &   \textbf{2.40}    &   \textbf{1.62}    &   \underline{9.25}    &  \underline{8.99}     &  \underline{2.64}     &  \underline{9.66}     &  \underline{9.66}     &  \textbf{0.85}    \\
\midrule
\multicolumn{10}{c}{\textit{\textbf{bitrate = 6.0\,kbps}}} \\
\midrule
EnCodec          & 6.17  &  6.03 &  2.26  & 14.40  & 14.03  & 3.22  & 16.26  & 16.26  &  1.89 \\
DAC              & \underline{1.95}  & \underline{2.06}  &  \underline{0.86}  & \underline{3.19}  & \underline{3.16} & \underline{1.38}  & \underline{6.84}  & \underline{6.84}  &  \underline{0.51} \\
SpeechTokenizer  & -- & -- &  -- & -- & -- &  -- & -- & -- & -- \\
X-Codec          &  -- &  -- &  -- & -- & -- &  -- &  -- &  -- & -- \\
\textbf{AffectCodec}      &  \textbf{0.85}     &  \textbf{0.12}     &   \textbf{0.76}    &   \textbf{1.19}    &   \textbf{1.12}    &  \textbf{1.16}     &   \textbf{2.20}    &  \textbf{2.20}     &   \textbf{0.30}   \\
\bottomrule
\end{tabular}
}
\end{table}


\begin{table}[t]
\centering
\caption{Acoustic quality and intelligibility across bitrates and datasets. 
ViSQOL estimates perceptual quality on a MOS-like scale (higher is better). 
STOI measures short-time objective intelligibility (higher is better). 
WER (\%) reports word error rate from Whisper-Large-v3~\cite{whisper} (lower is better).}

\label{tab:acoustic}
\vskip 0.1in
\small
\setlength{\tabcolsep}{4.5pt}
\renewcommand{\arraystretch}{1}
\scalebox{1}{
\begin{tabular}{l rrr rrr rrr}
\toprule
\multicolumn{1}{c}{\multirow{2}{*}{\raisebox{-0.6ex}{\textbf{Model}}}}
 & \multicolumn{3}{c}{\textbf{IEMOCAP}}
 & \multicolumn{3}{c}{\textbf{CREMA-D}}
 & \multicolumn{3}{c}{\textbf{ESD}} \\
\cmidrule(lr){2-4} \cmidrule(lr){5-7} \cmidrule(lr){8-10}
 & ViSQOL & STOI & WER
 & ViSQOL & STOI & WER
 & ViSQOL & STOI & WER \\
\midrule
\multicolumn{10}{c}{\textit{\textbf{bitrate = 1.5\,kbps}}} \\
\midrule
EnCodec          & 2.61 & 0.659 & 37.69 & 2.63 & 0.652 & 31.88 & 3.18 & 0.785 & 14.63 \\
DAC              & 2.42 & 0.653 & 29.68 & 2.80 & \underline{0.661} & 21.49 & 3.20 & 0.761 & 10.48 \\
SpeechTokenizer  & 2.64 & 0.623 & 23.47 & 2.41 & 0.593 & 24.28 & 3.45 & 0.816 &  7.54 \\
X-Codec          & \underline{3.07} & \underline{0.696} &  \textbf{9.54} & \underline{2.95} & 0.656 &  \textbf{5.09} & \textbf{3.92} & \underline{0.859} &  \textbf{3.46} \\
\textbf{AffectCodec} & \textbf{3.31} & \textbf{0.730} & \underline{15.39} & \textbf{3.14} & \textbf{0.707} & \underline{12.62} & \underline{3.82} & \textbf{0.872} &  \underline{6.04} \\
\midrule
\multicolumn{10}{c}{\textit{\textbf{bitrate = 3.0\,kbps}}} \\
\midrule
EnCodec          & 3.11 & 0.724 & 17.30 & 3.13 & 0.725 & 10.68 & 3.69 & 0.854 &  5.49 \\
DAC              & \underline{3.46} & \underline{0.775} & 10.45 & \underline{3.61} & \underline{0.782} &  4.56 & 3.98 & 0.868 &  3.88 \\
SpeechTokenizer  & 3.30 & 0.700 & 13.18 & 3.13 & 0.670 & 10.06 & 4.02 & \underline{0.886} &  4.03 \\
X-Codec          & 3.35 & 0.728 &  \underline{7.03} & 3.26 & 0.687 &  \textbf{3.29} & \underline{4.13} & 0.875 &  \textbf{2.64} \\
\textbf{AffectCodec} & \textbf{3.82} & \textbf{0.833} &  \textbf{6.99} & \textbf{3.63} & \textbf{0.816} &  \underline{3.44} & \textbf{4.20} & \textbf{0.932} &  \underline{2.77} \\
\midrule
\multicolumn{10}{c}{\textit{\textbf{bitrate = 6.0\,kbps}}} \\
\midrule
EnCodec          & 3.49 & 0.785 &  9.08 & 3.54 & 0.798 &  3.51 & 4.01 & 0.907 &  3.38 \\
DAC              & \textbf{4.37} & \textbf{0.918} &  \textbf{4.42} & \textbf{4.32} & \textbf{0.916} &  \textbf{1.13} & \textbf{4.61} & \textbf{0.969} &  \textbf{1.18} \\
SpeechTokenizer  &   -- &    -- &    -- &   -- &    -- &    -- &   -- &    -- &    -- \\
X-Codec          &   -- &    -- &    -- &   -- &    -- &    -- &   -- &    -- &    -- \\
\textbf{AffectCodec} & \underline{4.18} & \underline{0.902} &  \underline{4.63} & \underline{4.06} & \underline{0.899} &  \underline{1.83} & \underline{4.44} & \underline{0.963} &  \underline{1.79} \\
\bottomrule
\end{tabular}
}
\end{table}

\paragraph{Acoustic reconstruction quality.}
Table~\ref{tab:acoustic} shows that AffectCodec preserves emotion with only a
limited acoustic trade-off. At 1.5\,kbps, AffectCodec achieves the best
ViSQOL and STOI on IEMOCAP and CREMA-D, and the best STOI on ESD, indicating
that explicit emotion allocation does not undermine low-bitrate acoustic
quality. At 3.0\,kbps, it further obtains the best ViSQOL and STOI across all
three benchmarks, while matching or closely approaching the best WER. Since
X-Codec is explicitly designed for semantic tokenization, its strong WER is
expected; nevertheless, AffectCodec remains second-best or closely competitive
on WER while substantially improving emotion preservation. At 6.0\,kbps, DAC
remains strongest on acoustic metrics, but AffectCodec is consistently
second-best with small gaps in ViSQOL and STOI. These results show that AffectCodec
maintains competitive perceptual quality and intelligibility despite reserving
quantization capacity for affective information, yielding a favorable trade-off
between emotion preservation and acoustic reconstruction.

\begin{table}[t]
\centering
\caption{Ablation study on IEMOCAP at 1.5\,kbps. Each row modifies one
component from the full model. For MEDR, WEDR, and VAD MSE, lower is better; for ViSQOL and STOI, higher is better.}
\label{tab:ablation}
\vskip 0.1in
\small
\setlength{\tabcolsep}{4.5pt}
\renewcommand{\arraystretch}{1.2}
\begin{tabular}{ccc c ccccc}
\toprule
\multicolumn{3}{c}{\textbf{Architecture}}
& 
& \multicolumn{1}{c}{\multirow{2}{*}{\raisebox{-0.6ex}{{MEDR}}}}
& \multicolumn{1}{c}{\multirow{2}{*}{\raisebox{-0.6ex}{{WEDR}}}}
& \multicolumn{1}{c}{\multirow{2}{*}{\raisebox{-0.6ex}{{MSE}}}}
& \multicolumn{1}{c}{\multirow{2}{*}{\raisebox{-0.6ex}{{ViSQOL}}}}
& \multicolumn{1}{c}{\multirow{2}{*}{\raisebox{-0.6ex}{{STOI}}}} \\
\cmidrule(lr){1-3}
\textbf{Quantizer} & \textbf{MRT} & \textbf{CEM}
& & & & & & \\
\midrule
RVQ              & \checkmark & \checkmark & & 14.44 & 14.58 & 6.39 & 2.70 & 0.675 \\
Factorized RFSQ  & \checkmark & \checkmark & & 10.23 & 10.26 & 6.51 & 2.95 & 0.684 \\
BD-RFSQ          &    & \checkmark & & 8.37 & 8.57 & 5.79 & 2.98 & 0.695 \\
BD-RFSQ          & \checkmark &    & & 6.94 & 6.95 & 4.29 & 3.11 & 0.721 \\
\midrule
BD-RFSQ & \checkmark & \checkmark & &
\textbf{5.27} & \textbf{5.63} & \textbf{2.48} & \textbf{3.31} & \textbf{0.730} \\
\bottomrule
\end{tabular}
\end{table}

\subsection{Ablation Study}\label{sec:ablation}

We ablate the main components of AffectCodec on IEMOCAP at 1.5\,kbps, the
most challenging low-bitrate setting. Each variant removes or replaces one
component while keeping the remaining training setup unchanged. As shown in
Table~\ref{tab:ablation}, the full model achieves the best results on both
emotion-preservation metrics and acoustic metrics.

The quantizer design has the largest impact. Replacing BD-RFSQ with standard
RVQ increases MEDR from 5.27\% to 14.44\% and WEDR from 5.63\% to 14.58\%,
showing that conventional residual vector quantization does not provide
sufficient structural protection for emotion-relevant information. Factorized
RFSQ improves over RVQ, reducing MEDR to 10.23\%, but remains clearly worse
than BD-RFSQ. This gap confirms that without the block-diagonal constraint, dominant
reconstruction gradients are free to repurpose emotion dimensions, directly
confirming the cross-stream leakage mechanism identified in
Sec.~\ref{sec:intro}.

MRT and the CEM module provide complementary gains. Removing MRT raises MEDR from
5.27\% to 8.37\% and increases V/A/D MSE from 2.48 to 5.79, indicating that
multi-rate training is important for robust emotion preservation under limited
capacity. Removing the CEM module also degrades performance, increasing MEDR to 6.94\%
and VAD MSE to 4.29, confirming that coarse-level emotion conditioning
supplies useful affective content to the reserved emotion subspace. The larger
drop caused by removing MRT further shows that explicit low-rate supervision is
particularly important at the 1.5\,kbps operating point.

Notably, the full model also achieves the best ViSQOL and STOI among all
ablations. This indicates that the proposed components do not merely trade
acoustic quality for emotion preservation; instead, structurally separating
emotion and acoustic information, enriching the emotion pathway, and training
across rates jointly improve the overall codec representation.

\section{Conclusion}

We have presented AffectCodec, an emotion-preserving neural speech codec that addresses the systematic loss of affective information in existing quantization pipelines. Our core contribution, Block-Diagonal Residual FSQ (BD-RFSQ), structurally isolates emotion and acoustic subspaces within the quantizer, transforming bit allocation from implicit and loss-driven to explicit and architecturally guaranteed. Combined with multi-granularity emotion conditioning and multi-rate training with biased stage dropout, AffectCodec achieves substantial reductions in Emotion Degradation Rate across three benchmarks and all tested bitrates---with the largest gains at $\leq 3$\,kbps where prior codecs degrade most sharply---while maintaining competitive acoustic quality and intelligibility. Thorough ablations confirm that each proposed component contributes meaningfully and that emotion preservation does not come at the expense of reconstruction fidelity. Beyond emotion, BD-RFSQ provides a general and principled mechanism for protecting designated speech attributes under low-bitrate neural compression while preserving the flat-token interface required by speech language models.

\textbf{Limitations} The emotion2vec teacher carries its own biases and may under-represent certain emotion categories; the BD-RFSQ partition sizes and multi-rate stage targets are manually chosen, and automatic attribute--rate allocation remains future work. Our evaluation focuses on 16\,kHz speech and emotion preservation metrics derived from external models; future work may further examine how the preserved emotion information benefits downstream speech language models.

\textbf{Broader Impact.} Improving codec-level emotion fidelity can benefit emotionally aware speech technologies. At the same time, better preservation and generation of affective cues could be misused for emotional manipulation or deceptive synthetic speech. We therefore encourage deployment together with transparency mechanisms, watermarking, and safeguards for responsible use.

{\small
\bibliographystyle{plainnat}
\bibliography{ref}
}

\newpage
\appendix

\section{BD-RFSQ Algorithm}\label{app:algorithm}
 
Algorithm~\ref{alg:bdrfsq} provides pseudocode for the BD-RFSQ forward pass. At inference time, the number of active stages can be truncated to $K' < K$ for lower-bitrate operation without retraining.
 
\begin{algorithm}[h]
\caption{BD-RFSQ Forward Pass}\label{alg:bdrfsq}
\begin{algorithmic}[1]
\Require Emotion features $\mathbf{E} \in \mathbb{R}^{d_e' \times T}$, fused acoustic features $\mathbf{A}_f \in \mathbb{R}^{d_a' \times T}$, number of stages $K$
\Ensure Quantized latent $\widehat{\mathbf{U}} \in \mathbb{R}^{d \times T}$, token sequence $\{I_k\}_{k=1}^{K}$

\State \textcolor{gray}{\textit{\% Latent partitioning}}
\State $\mathbf{U}_e \gets \phi_e(\mathbf{E}) \in \mathbb{R}^{d_e \times T}$ \Comment{Linear: $d_e' \to d_e$}
\State $\mathbf{U}_a \gets \phi_a(\mathbf{A}_f) \in \mathbb{R}^{d_a \times T}$ \Comment{Linear: $d_a' \to d_a$}
\State $\mathbf{U} \gets \mathrm{Concat}(\mathbf{U}_e,\, \mathbf{U}_a)$ \Comment{$d = d_e + d_a$}

\State \textcolor{gray}{\textit{\% Residual iteration}}
\State $\mathbf{r}_0 \gets \mathbf{U}$, \quad $\widehat{\mathbf{U}} \gets \mathbf{0}$

\For{$k = 1, 2, \dots, K$}
    \State \textcolor{gray}{\textit{\% (i) Block-diagonal input projection}}
    \State $\mathbf{z}_k \gets
    \begin{bmatrix} \pi_{\mathrm{in},e}^{(k)} & \mathbf{0} \\ \mathbf{0} & \pi_{\mathrm{in},a}^{(k)} \end{bmatrix}
    \mathbf{r}_{k-1}$ \Comment{$\mathbb{R}^d \to \mathbb{R}^f$}

    \State \textcolor{gray}{\textit{\% (ii) Affine normalization \& scalar quantization}}
    \State $\mathbf{s}_k \gets \mathrm{softplus}(\boldsymbol{\ell}_k) + \epsilon$
    \State $\widetilde{\mathbf{z}}_k \gets \mathbf{s}_k \odot (\mathbf{z}_k - \mathbf{b}_k)$
    \State $\widehat{\widetilde{\mathbf{z}}}_k,\, I_k \gets \mathrm{FSQ}(\widetilde{\mathbf{z}}_k)$

    \State \textcolor{gray}{\textit{\% (iii) Back-projection \& residual update}}
    \State $\widehat{\mathbf{z}}_k \gets \widehat{\widetilde{\mathbf{z}}}_k \oslash \mathbf{s}_k + \mathbf{b}_k$
    \State $\widehat{\mathbf{u}}_k \gets
    \begin{bmatrix} \pi_{\mathrm{out},e}^{(k)} & \mathbf{0} \\ \mathbf{0} & \pi_{\mathrm{out},a}^{(k)} \end{bmatrix}
    \widehat{\mathbf{z}}_k$ \Comment{$\mathbb{R}^f \to \mathbb{R}^d$}
    \State $\mathbf{r}_k \gets \mathbf{r}_{k-1} - \widehat{\mathbf{u}}_k$
    \State $\widehat{\mathbf{U}} \gets \widehat{\mathbf{U}} + \widehat{\mathbf{u}}_k$
\EndFor

\Return $\widehat{\mathbf{U}},\; \{I_k\}_{k=1}^{K}$
\end{algorithmic}
\end{algorithm}

\section{Proof of Structural Partition Guarantee}\label{app:proofs}
 
\begin{proposition}[Block separation invariant]\label{prop:partition}
For any BD-RFSQ forward pass with $K$ stages, the emotion residual
$\mathbf{r}_k^{(1:d_e)}$ depends only on $\mathbf{U}^{(1:d_e)}$ and
the acoustic residual $\mathbf{r}_k^{(d_e{+}1:d)}$ depends only on
$\mathbf{U}^{(d_e{+}1:d)}$, for all $k=0,1,\dots,K$.
\end{proposition}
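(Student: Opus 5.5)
The plan is induction on the stage index $k$. For each stage I will first record that every operation in the stage maps the emotion block only to the emotion block and the acoustic block only to the acoustic block.

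Write $\mathbf{r}_k^{e}=\mathbf{r}_k^{(1:d_e)}$ and $\mathbf{r}_k^{a}=\mathbf{r}_k^{(d_e+1:d)}$. The base case $k=0$ is immediate, since $\mathbf{r}_0=\mathbf{U}$ and so $\mathbf{r}_0^e=\mathbf{U}^{(1:d_e)}$ and $\mathbf{r}_0^a=\mathbf{U}^{(d_e+1:d)}$.

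For the inductive step, assume the claim for $k-1$ and trace stage $k$ through steps (i)--(iii) of Sec.~\ref{sec:bdrfsq}.

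\textbf{Step (i).} By the block-diagonal form of $\pi_{\mathrm{in}}^{(k)}$,
\begin{equation*}
\mathbf{z}_k^{(1:f_e)}=\pi_{\mathrm{in},e}^{(k)}\mathbf{r}_{k-1}^e, \qquad \mathbf{z}_k^{(f_e+1:f)}=\pi_{\mathrm{in},a}^{(k)}\mathbf{r}_{k-1}^a .
\end{equation*}

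\textbf{Step (ii).} The affine normalization $\mathbf{s}_k\odot(\mathbf{z}_k-\mathbf{b}_k)$ is elementwise, with fixed parameters that are not computed from the data. FSQ rounds each coordinate independently to its own level set. Its quantized output is therefore coordinatewise too: the value in coordinate $j$ depends only on $\widetilde{\mathbf{z}}_k^{(j)}$. The de-normalization $\oslash\mathbf{s}_k+\mathbf{b}_k$ is again elementwise. Hence $\widehat{\mathbf{z}}_k^{(1:f_e)}$ is a function of $\mathbf{r}_{k-1}^e$ alone, and $\widehat{\mathbf{z}}_k^{(f_e+1:f)}$ is a function of $\mathbf{r}_{k-1}^a$ alone.

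\textbf{Step (iii).} The block-diagonal $\pi_{\mathrm{out}}^{(k)}$ gives $\widehat{\mathbf{u}}_k^{(1:d_e)}=\pi_{\mathrm{out},e}^{(k)}\widehat{\mathbf{z}}_k^{(1:f_e)}$, and likewise for the acoustic block. So
\begin{equation*}
\mathbf{r}_k^e=\mathbf{r}_{k-1}^e-\pi_{\mathrm{out},e}^{(k)}\widehat{\mathbf{z}}_k^{(1:f_e)}=G_k^e(\mathbf{r}_{k-1}^e)
\end{equation*}
for some map $G_k^e$ that depends only on the stage parameters, and symmetrically $\mathbf{r}_k^a=G_k^a(\mathbf{r}_{k-1}^a)$. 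Composing with the induction hypothesis, $\mathbf{r}_k^e=G_k^e\circ\cdots\circ G_1^e(\mathbf{U}^{(1:d_e)})$, which completes the induction. The same argument shows that the emotion sub-indices of $I_k$ and the partial sums $\widehat{\mathbf{U}}_m$ on the emotion block depend only on $\mathbf{U}^{(1:d_e)}$, and that both block statements survive truncation to $K'<K$ stages.

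The only delicate step is (ii). The claim relies on FSQ being genuinely coordinatewise: the per-dimension levels $\mathbf{L}$ have to be aligned with the partition, with the first $f_e$ levels assigned to emotion. It also relies on the composite index $I_k$ being merely a bijective mixed-radix encoding of the coordinatewise codes, so that it introduces no mixing. I would state this explicitly, together with the fact that $\mathbf{s}_k$ and $\mathbf{b}_k$ are input-independent, in contrast to LayerNorm, which would couple all coordinates through shared statistics.

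I would also note that ``depends only on'' is a statement about the forward map for fixed parameters. Gradients to the shared parameters from the losses are outside the scope of the proposition.
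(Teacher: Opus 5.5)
Your proof is correct and matches the paper's argument: induction on $k$, tracing the block-diagonal input projection, the per-dimension affine/FSQ/inverse-affine steps, the block-diagonal output projection, and the coordinatewise residual update. Your additional remarks (input-independence of $\mathbf{s}_k,\mathbf{b}_k$, the mixed-radix composite index, truncation to $K'<K$) are reasonable refinements that the paper leaves implicit. One small correction: the separation does not actually require the levels $\mathbf{L}$ to be aligned with the partition, since coordinatewise FSQ alone suffices; the alignment only determines how bits are split between the two blocks.
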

 
\begin{proof}
We proceed by induction on the stage index $k$.
 
\textbf{Base case} ($k{=}0$). $\mathbf{r}_0 = \mathbf{U} = \mathrm{Concat}(\mathbf{U}_e, \mathbf{U}_a)$, which is trivially block-separated by construction.
 
\textbf{Inductive step.} Assume $\mathbf{r}_{k-1}$ is block-separated, i.e., $\mathbf{r}_{k-1}^{(1:d_e)}$ depends only on $\mathbf{U}^{(1:d_e)}$ and $\mathbf{r}_{k-1}^{(d_e{+}1:d)}$ depends only on $\mathbf{U}^{(d_e{+}1:d)}$. We trace each operation in stage $k$:
 
\begin{enumerate}
    \item \textit{Block-diagonal input projection.}
    $\mathbf{z}_k = \pi_{\mathrm{in}}^{(k)}(\mathbf{r}_{k-1})$, where $\pi_{\mathrm{in}}^{(k)} = \mathrm{diag}(\pi_{\mathrm{in},e}^{(k)}, \pi_{\mathrm{in},a}^{(k)})$. By block-diagonality, $\mathbf{z}_k^{(1:f_e)}$ depends only on $\mathbf{r}_{k-1}^{(1:d_e)}$ and $\mathbf{z}_k^{(f_e{+}1:f)}$ depends only on $\mathbf{r}_{k-1}^{(d_e{+}1:d)}$. Block separation is preserved.
 
    \item \textit{Affine normalization.}
    $\widetilde{\mathbf{z}}_k = \mathbf{s}_k \odot (\mathbf{z}_k - \mathbf{b}_k)$. Both $\odot$ (element-wise multiplication) and subtraction act per-dimension, so no cross-partition mixing occurs.
 
    \item \textit{FSQ quantization.}
    $\widehat{\widetilde{\mathbf{z}}}_k = \mathrm{FSQ}(\widetilde{\mathbf{z}}_k)$. FSQ independently rounds each scalar dimension to its nearest grid point, preserving block separation.
 
    \item \textit{Inverse affine.}
    $\widehat{\mathbf{z}}_k = \widehat{\widetilde{\mathbf{z}}}_k \oslash \mathbf{s}_k + \mathbf{b}_k$. Again per-dimension, preserving separation.
 
    \item \textit{Block-diagonal output projection.}
    $\widehat{\mathbf{u}}_k = \pi_{\mathrm{out}}^{(k)}(\widehat{\mathbf{z}}_k)$, where $\pi_{\mathrm{out}}^{(k)} = \mathrm{diag}(\pi_{\mathrm{out},e}^{(k)}, \pi_{\mathrm{out},a}^{(k)})$. By the same argument as step~1, $\widehat{\mathbf{u}}_k$ is block-separated.
 
    \item \textit{Residual update.}
    $\mathbf{r}_k = \mathbf{r}_{k-1} - \widehat{\mathbf{u}}_k$. Coordinate-wise subtraction of two block-separated vectors yields a block-separated result.
\end{enumerate}
 
By induction, block separation holds at every stage. Since $\widehat{\mathbf{U}} = \sum_{k=1}^{K}\widehat{\mathbf{u}}_k$ is a sum of block-separated vectors, the final quantized output is also block-separated: $\widehat{\mathbf{U}}^{(1:d_e)}$ depends only on $\mathbf{U}^{(1:d_e)}$, and $\widehat{\mathbf{U}}^{(d_e{+}1:d)}$ depends only on $\mathbf{U}^{(d_e{+}1:d)}$.
\end{proof}
 
\textbf{Remark.} This guarantee holds \emph{inside the quantizer}. The acoustic decoder receives the full concatenated $\widehat{\mathbf{U}}$ and may use both partitions jointly for reconstruction, which is by design: the structural separation prevents cross-stream gradient contamination during quantization, while the decoder retains full access for high-fidelity waveform synthesis.

\section{Evidence for Cross-Stream Gradient Leakage}
\label{app:leakage}

Section~\ref{sec:intro} claims that fully connected quantizer projections
allow acoustic reconstruction gradients to colonize emotion-designated
FSQ dimensions. We verify this empirically by training an
\emph{acoustic linear probe} on the emotion partition of two codec
variants that differ only in their projection structure.

\paragraph{Setup.}
We compare a \emph{fully connected} baseline (identical to AffectCodec
but with unconstrained \texttt{WNConv1d(1024$\to$9)} projections, i.e.\
no block-diagonal constraint) against \emph{AffectCodec} (BD-RFSQ,
block-diagonal projections). All other components---encoder, decoder,
CEM module, loss weights, and training data---are identical. For each
model we extract the per-frame emotion partition codes: the first 3
dimensions across all $K{=}8$ residual stages, yielding a
24-dimensional binary feature vector
$\mathbf{x}_t \in \{-1,+1\}^{24}$ per frame. We then fit an OLS
linear regression from $\mathbf{x}_t$ to the corresponding 80-bin
log-mel spectrogram frame $\mathbf{y}_t$, and evaluate $R^2$ on a
held-out test set. A high $R^2$ indicates that the emotion partition
linearly encodes acoustic information, i.e.\ gradient leakage has
occurred. Experiments use 200 LibriSpeech test-clean utterances
(73,007 frames at 50\,Hz). A random baseline (column-wise permutation
of $\mathbf{x}$) serves as a sanity check.

\paragraph{Results.}
Table~\ref{tab:leakage_probe} reports the acoustic probe $R^2$.

\begin{table}[h]
\centering
\caption{Acoustic linear probe $R^2$ on the emotion partition ($d_e{=}3$
dims, $K{=}8$ stages, 24 binary features). Higher $R^2$ indicates more
acoustic information linearly decodable from the emotion partition,
i.e.\ greater gradient leakage.}
\label{tab:leakage_probe}
\small
\begin{tabular}{lccc}
\toprule
\textbf{Model} & \textbf{$R^2$ (global)} & \textbf{$R^2$ (per-bin mean)} & \textbf{$R^2$ (per-bin median)} \\
\midrule
Fully connected (baseline) & 0.0985 & 0.0983 & 0.0992 \\
AffectCodec (BD-RFSQ)      & 0.0196 & 0.0198 & 0.0154 \\
Random                     & $-$0.0005 & $-$0.0005 & $-$0.0005 \\
\bottomrule
\end{tabular}
\end{table}

\paragraph{Analysis.}
The fully connected baseline achieves $R^2{=}0.099$---five times higher
than AffectCodec ($R^2{=}0.020$)---despite using only 24 binary features
and a strictly linear probe. Under these severe constraints, explaining
nearly 10\% of mel spectrogram variance indicates that acoustic
information has been \emph{explicitly and linearly encoded} in the
emotion FSQ dimensions by the reconstruction gradients. Because the two
models differ only in projection structure, the $\Delta R^2{=}0.079$
gap is directly attributable to the absence of a block-diagonal
constraint.

The residual $R^2{\approx}0.02$ in AffectCodec is not leakage but
reflects the \emph{inherent physical correlation} between emotion and
acoustics: affect is expressed through prosody, energy, and spectral
tilt, so even a structurally pure emotion partition will retain some
predictive power over mel features. The random baseline ($R^2{\approx}0$)
confirms that the probe method is well-calibrated and that this residual
is a property of the data rather than a measurement artifact.
Train and test $R^2$ agree to within 0.004 for both models,
ruling out overfitting.

\section{Rate-Distortion Analysis of Emotion Partition Dimensions}
\label{app:rd_search}

The emotion partition configuration $(f_e{=}3,\ L_e{=}2)$ is not an ad
hoc choice but the result of a systematic rate-distortion search. Because
affective features occupy a low-dimensional subspace, the optimal number
of quantization dimensions $d$ and levels per dimension $L$ cannot be
determined by intuition alone.

\paragraph{Search protocol.}
We evaluate all combinations of $d \in \{1,2,3,4\}$ and $L \in \{2,3,4\}$
(12 configurations) with $K{=}2$ residual stages trained in isolation
under identical optimization. Each configuration is scored on held-out data by the mean squared error
(MSE) between the FSQ reconstruction and the original emotion2vec
features, cosine similarity, and per-stage bitrate $R = d \cdot \log_2 L$.
Rather than fixing a single trade-off
parameter $\lambda$, we trace the \emph{Pareto front} and locate the
\emph{knee point} where marginal efficiency (MSE reduction per additional
bit) drops sharply.
The remaining bit budget $(f_a{=}6,\ L_a{=}4)$ is allocated entirely to
the acoustic partition after the emotion configuration is fixed.

\paragraph{Results.}
Table~\ref{tab:pareto_emotion} reports the Pareto-optimal subset;
Fig.~\ref{fig:pareto_rd} shows the front with the selected operating point.

\begin{table}[h]
\centering
\caption{Pareto-optimal FSQ configurations for the emotion partition
($K{=}2$ stages). MSE is computed between the FSQ reconstruction and
the original emotion2vec features on held-out data. Marginal efficiency
($\times10^{-3}$ MSE/bit) is computed relative to the preceding Pareto
point. Knee: $\star$; selected: $\diamond$.}
\label{tab:pareto_emotion}
\small
\begin{tabular}{ccrccr}
\toprule
$d$ & $L$ & \textbf{Bits} & \textbf{MSE} & \textbf{Cos.\ Sim.} &
\textbf{Marg.\ Eff.} \\
\midrule
1 & 2 &  2.0 & 0.4789 & 0.9979 & ---   \\
1 & 3 &  3.2 & 0.4560 & 0.9980 &  19.6 \\
$\star$\,2 & 2 &  4.0 & 0.2059 & 0.9993 & 301.2 \\
$\diamond$\,3 & 2 &  6.0 & 0.1516 & 0.9995 &  27.2 \\
4 & 2 &  8.0 & 0.1040 & 0.9996 &  23.8 \\
3 & 4 & 12.0 & 0.0813 & 0.9997 &   5.7 \\
4 & 3 & 12.7 & 0.0634 & 0.9998 &  26.3 \\
\bottomrule
\end{tabular}
\end{table}

\begin{figure}[h]
    \centering
    \includegraphics[width=0.65\textwidth]{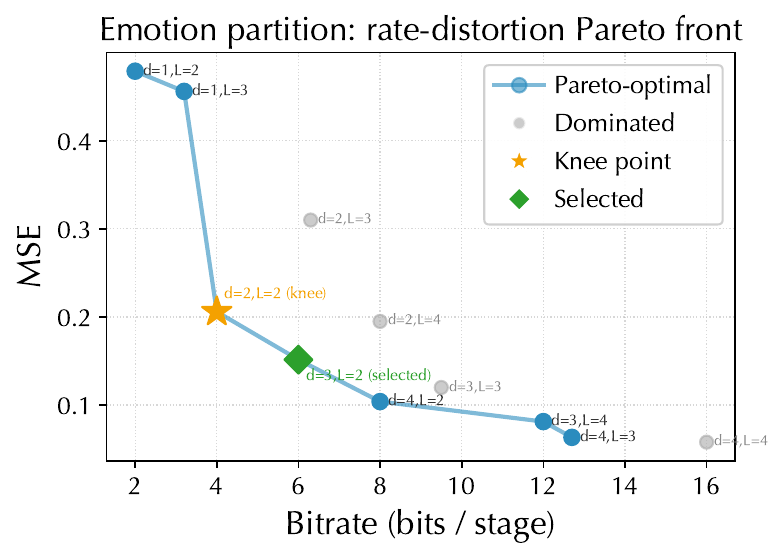}
    \caption{Rate-distortion Pareto front for the emotion FSQ partition.
    Filled markers on the solid line are Pareto-optimal; grey markers
    are dominated configurations. The knee point (yellow star) marks where
    marginal efficiency drops by more than 50\%; the selected operating
    point (green diamond, $d{=}3,\ L{=}2$) lies one step beyond the
    knee to provide a small reconstruction margin at a cost of only
    $+2$\,bits/stage.}
    \label{fig:pareto_rd}
\end{figure}

The knee lies at $(d{=}2,\ L{=}2,\ 4\text{\,bits/stage})$ with marginal
efficiency $301{\times}10^{-3}$\,MSE/bit---an order of magnitude above
all adjacent Pareto points. We select $(d{=}3,\ L{=}2)$, one step
beyond the knee: the additional 2\,bits/stage yields a further 26\%
MSE reduction and raises cosine similarity to 0.9995, providing a
comfortable margin for downstream emotion recognition tasks.
A consistent finding across all configurations is that \emph{increasing
dimensionality $d$ is more effective than increasing levels $L$}: at
comparable bitrates, higher-$d$ configurations reduce MSE by 19--22\%
over higher-$L$ alternatives, suggesting that emotion features occupy a
manifold whose intrinsic dimensionality exceeds the smallest $d$ values
tested. A parallel search on the acoustic branch confirms that acoustic
features are substantially less compressible: cosine similarity exceeds
0.999 for all emotion configurations above 4\,bits, whereas acoustic
features require ${\geq}16$\,bits to reach 0.983---an asymmetry that
justifies the unequal partition $(f_e{=}3, f_a{=}6)$ and motivates
allocating the remaining bit budget entirely to the acoustic partition.

\section{Affine Normalization in BD-RFSQ: Comparison with Prior Work}
\label{app:rfsq_comparison}

We describe how BD-RFSQ addresses residual magnitude decay and contrast
our affine normalization design with the conditioning strategies
proposed in Robust RFSQ~\cite{rrfsq}.

\paragraph{Residual magnitude decay.}
A fundamental obstacle in residual quantization is that successive
residuals shrink in magnitude: once early stages have captured the
dominant signal energy, later stages receive inputs concentrated near
zero and underutilize the fixed FSQ grid. RFSQ~\cite{rrfsq} proposes
two remedies. \emph{Scale conditioning} introduces one learnable global
scalar $\alpha_k$ per stage, rescaling the residual before quantization
and inverting the scaling after. \emph{LayerNorm conditioning} extends
this to per-dimension mean and variance correction, achieving better
distributional regularity; however, to support index-only decoding the
normalization statistics $(\mu_k, \sigma_k)$ must be decoupled from
per-sample inputs, which requires a two-phase procedure: train the
model to convergence, estimate $\mu_k$ and $\sigma_k$ from the
resulting residual distribution, then freeze them for inference.

\paragraph{Our approach: fully learnable per-dimension affine.}
We adopt a per-dimension learnable affine transformation that resolves
both the expressiveness gap of scale conditioning and the two-phase
complication of LayerNorm conditioning:
\begin{equation}
    \widetilde{\mathbf{z}}_k
    = \mathbf{s}_k \odot (\mathbf{z}_k - \mathbf{b}_k),
    \qquad
    \mathbf{s}_k = \mathrm{softplus}(\boldsymbol{\ell}_k) + \epsilon,
    \quad \epsilon = 0.1,
    \label{eq:affine_norm}
\end{equation}
where $\boldsymbol{\ell}_k \in \mathbb{R}^{f}$ and
$\mathbf{b}_k \in \mathbb{R}^{f}$ are jointly optimized throughout
training, initialized to $\mathbf{0}$ (identity at initialization).
The inverse de-normalization $\widehat{\widetilde{\mathbf{z}}}_k
\oslash \mathbf{s}_k + \mathbf{b}_k$ relies only on these fixed model
parameters---not on any runtime input statistics---so the bitstream
carries no side information and index-only decoding (tokens
$\to$ waveform) is supported exactly without any modification to the
codec interface.

\paragraph{Advantages over RFSQ conditioning.}
Table~\ref{tab:rfsq_comparison} summarizes the comparison.
Relative to scale conditioning, our formulation applies independent
correction to each FSQ dimension, correcting both inter-stage magnitude
decay and per-dimension variance imbalance within each stage; scale
conditioning applies only a single global rescaling and cannot correct
distributional shape. Relative to LayerNorm conditioning, our parameters
are end-to-end trainable: they adapt continuously as the encoder and
decoder co-evolve, whereas frozen statistics capture only the residual
distribution at the checkpoint where they were estimated. Both methods
support index-only decoding, but ours does so within a single-phase
training procedure. The $\epsilon{=}0.1$ floor in
Eq.~\eqref{eq:affine_norm} ensures $s_k \geq 0.1$ without a hard
clamp, and $\mathrm{softplus}$ provides non-zero gradient everywhere
($\partial s_k / \partial \boldsymbol{\ell}_k = \sigma(\boldsymbol{\ell}_k) > 0$),
preventing gradient stagnation that would occur at the boundaries of an
$\exp$-clamp parameterization.

\begin{table}[h]
\centering
\caption{Comparison of residual FSQ conditioning strategies for
addressing residual magnitude decay.}
\label{tab:rfsq_comparison}
\small
\begin{tabular}{lccc}
\toprule
\textbf{Property}
  & \textbf{RFSQ-Scale}
  & \textbf{RFSQ-LN}
  & \textbf{BD-RFSQ (ours)} \\
\midrule
Per-dimension scale correction & \texttimes & \checkmark & \checkmark \\
Per-dimension mean correction  & \texttimes & \checkmark & \checkmark \\
End-to-end trainable           & \checkmark & \texttimes~(frozen) & \checkmark \\
Single-phase training          & \checkmark & \texttimes & \checkmark \\
Index-only decoding            & \checkmark & \checkmark$^\dagger$ & \checkmark \\
\bottomrule
\multicolumn{4}{l}{$^\dagger$ Requires statistics to be frozen
post-training; not compatible with continued fine-tuning.}
\end{tabular}
\end{table}

\paragraph{Algorithm.}
Algorithm~\ref{alg:affine} details the affine normalization sub-routine
within each BD-RFSQ stage and the index-only decoding path that it
enables. The full BD-RFSQ forward pass is given in
Algorithm~\ref{alg:bdrfsq}.

\begin{algorithm}[h]
\caption{BD-RFSQ Affine Normalization and Index-Only Decoding}
\label{alg:affine}
\begin{algorithmic}[1]
\Require Projected residual $\mathbf{z}_k \in \mathbb{R}^{f \times T}$,
         learnable $\boldsymbol{\ell}_k \in \mathbb{R}^{f}$,
         $\mathbf{b}_k \in \mathbb{R}^{f}$,
         floor $\epsilon = 0.1$
\Ensure  Quantized reconstruction $\widehat{\mathbf{z}}_k$,
         code index $I_k$

\State \textcolor{gray}{\textit{\% Forward: normalize and quantize}}
\State $\mathbf{s}_k \gets \mathrm{softplus}(\boldsymbol{\ell}_k) + \epsilon$
       \Comment{per-dimension positive scale}
\State $\widetilde{\mathbf{z}}_k \gets \mathbf{s}_k \odot (\mathbf{z}_k - \mathbf{b}_k)$
       \Comment{affine normalization}
\State $\widehat{\widetilde{\mathbf{z}}}_k,\; I_k \gets \mathrm{FSQ}(\widetilde{\mathbf{z}}_k)$
       \Comment{scalar quantization, emits index $I_k$}

\State \textcolor{gray}{\textit{\% Inverse: de-normalize using fixed model parameters only}}
\State $\widehat{\mathbf{z}}_k \gets \widehat{\widetilde{\mathbf{z}}}_k \oslash \mathbf{s}_k + \mathbf{b}_k$
       \Comment{exact inverse; no per-sample statistics required}

\State \textcolor{gray}{\textit{\% Index-only decoding path (tokens $\to$ waveform)}}
\State \textbf{given} index $I_k$:
\State \quad $\widehat{\widetilde{\mathbf{z}}}_k \gets \mathrm{FSQ.indices\_to\_codes}(I_k)$
\State \quad $\widehat{\mathbf{z}}_k \gets \widehat{\widetilde{\mathbf{z}}}_k \oslash \mathbf{s}_k + \mathbf{b}_k$
       \Comment{$\mathbf{s}_k$, $\mathbf{b}_k$ are fixed decoder weights}

\Return $\widehat{\mathbf{z}}_k,\; I_k$
\end{algorithmic}
\end{algorithm}

\section{Additional Implementation Details}\label{app:details}

\subsection{Architecture Details}
 
\textbf{Acoustic backbone.}
We adopt the DAC~\cite{dac} encoder--decoder architecture with Snake activations~\cite{bigvgan}. The encoder uses convolutional blocks with strides $[2, 4, 5, 8]$ (total $320\times$ downsampling), yielding $T = L/320$ frames at 50\,Hz for 16\,kHz input. The base encoder dimension is $d_{\mathrm{enc}} = 64$, expanding to $d_a' = 64 \times 2^4 = 1024$ at the bottleneck. The decoder mirrors the encoder with transposed convolutions and strides $[8, 5, 4, 2]$, with a base dimension of 1536.
 
\textbf{Emotion encoder.}
We use emotion2vec-large~\cite{emotion2vec} as the frozen emotion teacher (hidden dimension $d_e' = 1024$, frame rate 50\,Hz matching the codec). The emotion2vec parameters are frozen throughout training and excluded from the saved checkpoint. A lightweight CNN adapter (following the XCodec~\cite{xcodec} encoder architecture with input and output channels both equal to 1024) aligns the emotion2vec features to the codec frame rate.
 
\textbf{Coarse-granularity Emotion Modulation (CEM) module.}
The attentive pooling layer consists of a two-layer MLP ($1024 \to 1024 \to 1$) with Tanh activation, followed by softmax-weighted aggregation over the time axis. The FiLM projections $g(\cdot)$ and $h(\cdot)$ are single linear layers ($1024 \to 1024$). We initialize $g$ such that $\gamma \approx \mathbf{1}$ and $h$ such that $\beta \approx \mathbf{0}$ (identity initialization).
 
\textbf{BD-RFSQ quantizer.}
$K = 8$ residual stages, input partition $(d_e, d_a) = (256, 768)$, FSQ partition $(f_e, f_a) = (3, 6)$, scalar levels $\mathbf{L} = [2, 2, 2, 4, 4, 4, 4, 4, 4]$. We set $\mathrm{preserve\_symmetry} = \mathrm{True}$ for $L{=}2$ dimensions. Commitment loss weight $\alpha = 0.25$. All block-diagonal projections are implemented as weight-normalized 1$\times$1 convolutions.
 
\textbf{Softplus affine normalization.}
Scale $\mathbf{s}_k = \mathrm{softplus}(\boldsymbol{\ell}_k) + 0.1$, where $\boldsymbol{\ell}_k \in \mathbb{R}^{f}$ is initialized to zero (yielding $\mathbf{s}_k^{(\mathrm{init})} \approx 0.793$). Bias $\mathbf{b}_k \in \mathbb{R}^{f}$ is zero-initialized.
 
\textbf{Emotion decoder.}
A lightweight CNN decoder (following the XCodec~\cite{xcodec} decoder architecture) maps the quantized emotion partition $\widehat{\mathbf{U}}_{1:d_e}$ back to the emotion2vec feature space for the emotion reconstruction loss. A linear layer ($256 \to 1024$) precedes the CNN decoder.
 
\textbf{Post-quantization projection.}
A linear layer ($1024 \to 1024$) is applied to the full quantized latent $\widehat{\mathbf{U}}$ before feeding it to the acoustic decoder, providing additional capacity to mix the separately quantized emotion and acoustic representations for reconstruction.

\subsection{Training Details}

\textbf{Optimizer.}
Generator and discriminator are each trained with AdamW using separate optimizers.
The scheduler follows an exponential learning rate decay schedule.

\textbf{Schedule.}
Total 250K steps, batch size 12.
All losses are active throughout training with fixed weights; no phased warm-up is applied.

\textbf{Discriminator.}
Multi-scale STFT discriminator with feature matching, identical to DAC~\cite{dac}.
The discriminator is trained jointly with the generator using separate optimizers.

\textbf{Hardware.}
4$\times$NVIDIA RTX4090 24\,GB GPUs, approximately 72 hours wall-clock time.

\subsection{Loss Weights}\label{app:loss_weights}

Table~\ref{tab:loss_weights} lists all loss weighting coefficients.
All weights are fixed throughout training; no dynamic balancing is applied.

\begin{table}[h]
\centering
\caption{Loss weighting coefficients.}
\label{tab:loss_weights}
\small
\begin{tabular}{lc}
\toprule
\textbf{Loss term} & \textbf{Weight} \\
\midrule
Mel spectrogram $\mathcal{L}_{\mathrm{mel}}$        & 15.0 \\
GAN generator $\mathcal{L}_{\mathrm{gen}}$           & 1.0  \\
Feature matching $\mathcal{L}_{\mathrm{feat}}$       & 2.0  \\
Commitment $\mathcal{L}_{\mathrm{cm}}$               & 0.25 \\
Emotion feature $\mathcal{L}_{\mathrm{emo}}$         & 25.0 \\
Emotion cycle $\mathcal{L}_{\mathrm{cycle}}$         & 25.0 \\
Multi-rate $\mathcal{L}_{\mathrm{mr}}$               & 1.0  \\
\bottomrule
\end{tabular}
\end{table}

\subsection{Multi-Rate and Biased Dropout Configuration}\label{app:mr_config}
 
Table~\ref{tab:mr_config} specifies the multi-rate supervision targets, per-rate loss weights, and biased stage dropout distribution.
 
\begin{table}[h]
\centering
\caption{Multi-rate training and biased stage dropout configuration.}
\label{tab:mr_config}
\small
\begin{tabular}{ccccc}
\toprule
\textbf{Target stages} $m$ & \textbf{Bitrate} & \textbf{Mel weight} $w_m^{\mathrm{mel}}$ & \textbf{Cycle weight} $w_m^{\mathrm{cycle}}$ & \textbf{Dropout prob.} \\
\midrule
2  & 1.5\,kbps & 0.5 & 0.5 & 0.50 \\
4  & 3.0\,kbps & 0.3 & 0.3 & 0.30 \\
8  & 6.0\,kbps & 0.0 & 0.0 & 0.20 \\
\bottomrule
\end{tabular}
\vspace{0.5em}
 
\raggedright
\small
\textit{Note:} Stage 8 (full model) receives zero multi-rate loss weight because it is already supervised by the main-path reconstruction and cycle losses. The quantizer dropout probability is 0.75: 75\% of training samples draw a stage count from the categorical distribution above, while the remaining 25\% use all $K{=}8$ stages.
\end{table}

\section{Evaluation Protocol Details}\label{app:eval}
 
\subsection{SER Classifiers for EDR}
 
To avoid circular evaluation with the emotion2vec teacher used during training, we compute Emotion Degradation Rate (EDR) using three independently trained Speech Emotion Recognition (SER) classifiers, following the S3PRL/SUPERB benchmark design~\cite{superb}:
 
\begin{enumerate}
    \item \textbf{HuBERT-Large}~\cite{hubert}: frozen features from the last hidden layer, followed by a mean-pooling layer and a two-layer MLP classifier.
    \item \textbf{WavLM-Large}~\cite{wavlm}: same downstream architecture as above.
    \item \textbf{Wav2Vec\,2.0-Large}~\cite{wav2vec2}: same downstream architecture as above.
\end{enumerate}
 
Each classifier is trained on the emotion labels of the respective dataset (IEMOCAP, CREMA-D, or ESD) and applied to both original and codec-reconstructed speech. The reported EDR is the average across the three classifiers.
 
\subsection{V/A/D MSE}
 
Valence--Arousal--Dominance (V/A/D) MSE captures continuous affective distortion that categorical EDR may miss. We use a Wav2Vec\,2.0-Large model fine-tuned on MSP-Podcast~\cite{msp} to predict V/A/D values in $[0, 1]$ for both original and reconstructed speech. The MSE is computed as
\begin{equation}
    \mathrm{MSE}_{\mathrm{VAD}} = \frac{1}{3N}\sum_{i=1}^{N} \sum_{d \in \{V,A,D\}} \left( \hat{y}_d^{(i)} - y_d^{(i)} \right)^2,
\end{equation}
where $y_d^{(i)}$ and $\hat{y}_d^{(i)}$ are the predicted values from original and reconstructed speech respectively. We report MSE $\times 10^{-3}$ in all tables for readability.

\section{Bitrate Calculation}\label{app:bitrate}
 
BD-RFSQ uses FSQ levels $\mathbf{L} = [2,2,2,4,4,4,4,4,4]$ at each of $K$ residual stages. Each frame at each stage produces a single composite index with
\begin{equation}
    C = \prod_{j=1}^{f} L_j = 2^3 \times 4^6 = 8 \times 4096 = 32{,}768 = 2^{15} \text{ possible codes.}
\end{equation}
This requires $\log_2(2^{15}) = 15$ bits per frame per stage. Within these 15 bits, the emotion partition contributes $\log_2(2^3) = 3$ bits and the acoustic partition contributes $\log_2(4^6) = 12$ bits. This 3:12 ratio is \emph{structurally fixed} at every stage by the block-diagonal design.
 
With a frame rate of 50\,Hz, the bitrate at $K'$ active stages is
\begin{equation}
    \text{Bitrate} = 15 \times K' \times 50 = 750 \, K' \;\text{bps}.
\end{equation}
 
Table~\ref{tab:bitrate} lists the operating bitrates used in this work and the corresponding emotion/acoustic bit allocation.
 
\begin{table}[h]
\centering
\caption{Bitrate configuration at different numbers of active stages.}
\label{tab:bitrate}
\small
\begin{tabular}{ccccc}
\toprule
\textbf{Active stages} $K'$ & \textbf{Total (kbps)} & \textbf{Emo bits/frame} & \textbf{Aco bits/frame} & \textbf{Emo ratio} \\
\midrule
2 & 1.5 & 6 & 24 & 20\% \\
4 & 3.0 & 12 & 48 & 20\% \\
8 & 6.0 & 24 & 96 & 20\% \\
\bottomrule
\end{tabular}
\end{table}

\section{Computational Cost}\label{app:compute}

Table~\ref{tab:params} reports the measured parameter count of each module in
AffectCodec and comparisons with baselines. All counts are obtained by running
\texttt{sum(p.numel() for p in model.parameters())} on the respective
official checkpoints.

\begin{table}[h]
\centering
\caption{Parameter count breakdown. All values measured from loaded checkpoints.}
\label{tab:params}
\small
\begin{tabular}{lrrr}
\toprule
\textbf{Module} & \textbf{Total (M)} & \textbf{Trainable (M)} & \textbf{Frozen (M)} \\
\midrule
Acoustic encoder (DAC)                        & 21.52 & 21.52 & 0.00 \\
Acoustic decoder (DAC)                        & 52.33 & 52.33 & 0.00 \\
Emotion CNN adapter -- encoder                & 26.22 & 26.22 & 0.00 \\
Emotion CNN adapter -- decoder                & 29.36 & 29.36 & 0.00 \\
CEM module (AttentivePool + FiLM)             &  3.15 &  3.15 & 0.00 \\
Pre-quantization projections ($\phi_e$, $\phi_a$) & 1.05 & 1.05 & 0.00 \\
BD-RFSQ ($K{=}8$ stages)                     &  0.10 &  0.10 & 0.00 \\
Post-quantization projections                 &  1.31 &  1.31 & 0.00 \\
\midrule
\textbf{Total (trainable)}                    & \textbf{135.05} & \textbf{135.05} & \textbf{0.00} \\
emotion2vec-large (frozen)                    & 164.05 & 0.00 & 164.05 \\
\midrule
\textbf{Total (loaded)}                       & \textbf{299.10} & \textbf{135.05} & \textbf{164.05} \\
\bottomrule
\end{tabular}
\end{table}

\textbf{Comparison with baselines.}
The trainable parameter count of AffectCodec (135.1\,M) is larger than DAC
(74.2\,M) but smaller than X-Codec (160.7\,M), primarily due to the emotion
CNN adapter (encoder + decoder, 55.6\,M combined), which follows the
XCodec~\cite{xcodec} architecture and constitutes the dominant additional cost
over the DAC backbone. The BD-RFSQ quantizer itself adds only 0.1\,M
parameters, as its projections operate in a compact 9-dimensional FSQ space.
The frozen emotion2vec encoder (164.0\,M) adds memory overhead during training
but can be discarded at inference when decoding from token indices, since only
the BD-RFSQ lookup, post-quantization projection, and acoustic decoder are
required.

\textbf{Inference cost.}
At inference, the primary additional cost over a standard DAC-like codec is
(1) one forward pass through the frozen emotion2vec encoder and CNN adapter,
and (2) the CEM module (AttentivePool + FiLM, 3.1\,M). The block-diagonal
projections in BD-RFSQ are $1{\times}1$ convolutions in a 9-dimensional space
and contribute negligible compute. When decoding from token indices (e.g., in
a speech language model pipeline), only the BD-RFSQ index-to-code lookup,
post-quantization projection, and waveform decoder are executed.

\section{Baseline Configuration Details}\label{app:baselines}
 
All baselines use official pretrained checkpoints and are evaluated without fine-tuning. Table~\ref{tab:baselines} summarizes the key configuration of each baseline.
 
\begin{table}[h]
\centering
\caption{Baseline codec configurations.}
\label{tab:baselines}
\small
\setlength{\tabcolsep}{3.5pt}
\begin{tabular}{lcccccc}
\toprule
\textbf{Model} & \textbf{Quantizer} & \textbf{Stages} & \textbf{Codebook} & \textbf{Frame rate} & \textbf{Max kbps} & \textbf{Emo.\ mech.} \\
\midrule
EnCodec       & RVQ    & 8  & 1024 & 75\,Hz & 6.0  & None \\
DAC           & RVQ    & 9  & 1024 & 86\,Hz & 8.0  & None \\
SpeechToken.  & RVQ    & 8  & 1024 & 50\,Hz & 4.0  & None \\
X-Codec       & RVQ    & 8  & 1024 & 50\,Hz & 4.0  & None \\
\midrule
AffectCodec   & BD-RFSQ & 8 & $2^{15}$ & 50\,Hz & 6.0  & BD-RFSQ+CEM \\
\bottomrule
\end{tabular}
\end{table}
 
For fair bitrate comparison, we truncate the number of active RVQ/RFSQ stages at inference. EnCodec and DAC support up to 6.0\,kbps with comparable stage truncation; SpeechTokenizer and X-Codec both support up to 4.0\,kbps. SpeechTokenizer and X-Codec entries at 6.0\,kbps are marked ``--'' in the main tables as they do not support this bitrate.

\section{Reproducibility}\label{app:reproduce}
 
We summarize the key information for reproducing the main results:
 
\begin{itemize}
    \item \textbf{Code:} We will release the full training code upon acceptance, including the BD-RFSQ quantizer, CEM module, and multi-rate training loop.
 
    \item \textbf{Training data:} LibriSpeech~\cite{librispeech} is publicly available. IEMOCAP~\cite{iemocap} requires a license agreement from USC.
 
    \item \textbf{Evaluation data:} All three evaluation benchmarks (IEMOCAP, CREMA-D~\cite{crema}, ESD~\cite{esd}) are publicly available. CREMA-D and ESD can be downloaded without restrictions; IEMOCAP requires a license.
 
    \item \textbf{Pretrained dependencies:} emotion2vec-large is available from ModelScope/FunASR. HuBERT-Large, WavLM-Large, and Wav2Vec\,2.0-Large are available from HuggingFace. Whisper-Large-v3 is available from OpenAI.
 
    \item \textbf{Compute:} 4$\times$RTX 4090 GPUs, $\sim$72 hours. A single-GPU configuration with proportionally smaller batch size and longer training is expected to produce comparable results.
\end{itemize}

\end{document}